\documentclass[conference]{IEEEtran}

\IEEEoverridecommandlockouts

\usepackage{times}
\usepackage{setspace}
\usepackage{graphicx}
\usepackage{subfig}
\usepackage{amsmath}
\usepackage{amsthm}
\usepackage{comment}
\usepackage{wrapfig}
\usepackage{caption}
\usepackage{float}
\floatstyle{plaintop}
\restylefloat{table}

\newtheorem{theorem}{Theorem}[section]
\usepackage{algorithm}
\usepackage{algorithmic}
\usepackage{amsmath}
\usepackage{mathbbol}
\usepackage{bbm}
\usepackage{graphicx}
\usepackage{comment}
\usepackage{etoolbox}

\newcommand{\ind}[1]{\textbf{I}(#1)}

\newcounter{subeqn}

\newcommand{\argmax}{\arg\!\max}

\newlength{\smallsimfigwidth}
\setlength{\smallsimfigwidth}{.24\textwidth}
\newlength{\simfigwidth}
\setlength{\simfigwidth}{.4\textwidth}

\title{MobiCacher: Mobility-Aware Content Caching in Small-Cell
Networks}

\author{
\IEEEauthorblockN{
Yang Guan$^\dagger$ 
~~ Yao Xiao$^\ddagger$ 
~~ Hao Feng$^\ddagger$ 
~~ Chien-Chung Shen$^\dagger$
~~ Leonard J.  Cimini, Jr.$^\ddagger$ 
}
\IEEEauthorblockA{
$\dagger$~Department of Computer and Information Sciences, \\
$\ddagger$~Department of Electrical and Computer Engineering,\\
University of Delaware, USA \\
Email: \{yangg, yxiao, haofeng, cshen, cimini\}@udel.edu}
\thanks{This material is based on research sponsored by the Air Force
Research Laboratory (FA9550-12-1-0086) and NSF (CNS-1016841). The U.S.
Government is authorized to reproduce and distribute reprints for
Governmental purposes notwithstanding any copyright notation
thereon.}}


\begin{document}
\maketitle

\begin{abstract}
  Small-cell networks have been proposed to meet the demand of ever growing mobile data traffic. One of the prominent challenges faced by small-cell networks is the lack of sufficient backhaul capacity to connect small-cell base stations (small-BSs) to the core network. We exploit the effective application layer semantics of both spatial and temporal locality to reduce the backhaul traffic. Specifically, small-BSs are equipped with storage facility to cache contents requested by users. As the {\em cache hit ratio} increases, most of the users' requests can be satisfied locally without incurring traffic over the backhaul. To make informed caching decisions, the mobility patterns of users must be carefully considered as users might frequently migrate from one small cell to another. We study the issue of mobility-aware content caching, which is formulated into an optimization problem with the objective to maximize the caching utility. As the problem is NP-complete, we develop a polynomial-time heuristic solution termed {\em MobiCacher} with bounded approximation ratio. We also conduct trace-based simulations to evaluate the performance of {\em MobiCacher}, which show that {\em MobiCacher} yields better caching utility than existing solutions.

\end{abstract}

\section{Introduction\label{sec:intro}}

We are in the era of a mobile data explosion. Cisco's most recent
Visual Networking Index (VNI) \cite{index2013global} reports that
global mobile data traffic reached 1.5 exabytes per month at the end
of 2013.  In addition, mobile data traffic is forecasted to grow at a
compound annual growth rate of more than 60\% from 2013 to 2018. The
ever growing demand from users of mobile wireless networks, in terms
of both capacity and coverage, is driving wireless technologies ({\em
e.g.}, LTE, LTE-A, IEEE 802.11ac/ad/af, WiMAX, {\em etc}.) alike
towards their limits.  To confront the challenges, one effective
approach to boosting both the spatial reuse and the area spectral
efficiency is to bring transmitters closer to receivers ({\it i.e.},
decrease the distance between transmitters and receivers).  Motivated
by this idea, small-cell networks (SCNs) \cite{hoydis2011green} have
been proposed, where small-cell base stations (small-BSs), including
pico, femto and relay base stations and WiFi access points, are
deployed together with macrocell base stations (Fig.~\ref{fig:scn}).
When combined with widened spectrum and more efficient links, SCNs
have the potential to provide the order-of-magnitude increase in
capacity required \cite{hoydis2011green}.

\begin{figure}[tb]
  \centering
  \includegraphics[width=.33\textwidth]{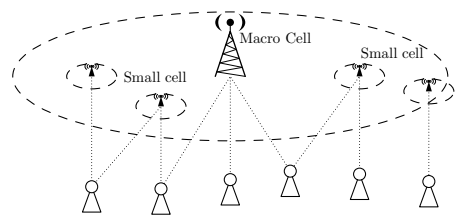}
  \caption{A typical small-cell network\label{fig:scn}}
\end{figure}

The emergence of SCNs, however, poses new challenges for the
operations of cellular networks. First, unlike macrocell base stations
that are connected to the core network via optical fibers, the
backhaul bandwidth of small-BSs is usually constrained.  For instance,
the backhaul of a femtocell base station is typically facilitated by
the customers' home networks such as DSL or cable modem
\cite{damnj2011a}. In addition, due to the reduced cell size and the
large number of small cells, mobile users (MUs) may frequently move
from one cell to another \cite{lopez2011enhanced}, resulting in rapid
fluctuations in the traffic load. 

The limitations on spectrum and per-link spectral efficiency and the
challenges of maximizing spatial reuse in SCNs prompt us to investigate
solutions beyond the physical layer and network topology. In this
paper, we propose to exploit the application layer semantics of both
spatial and temporal locality. Specifically, small-BSs are equipped
with storage facility for caching {\it popular} content in order to
tradeoff against constrained backhaul capacity. 

The benefits of caching popular contents at small-BSs are two-fold.
First, given enough content reuse, most of the requests from MUs can
be fulfilled locally, so that much less overhead will be incurred by
retrieving the requested contents from remote servers over the
backhaul. Second, storing contents at small-BSs reduces the content
retrieval delay. Experiments conducted in \cite{dandapat2013sprinkler}
have shown that downloading contents from a local cache can be up to 8
times faster than that from a remote server. Such reduction in the
retrieval time greatly improves the Quality of Experience (QoE)
perceived by MUs, particularly when MUs are highly mobile and have
short contact time with small-BSs.

However, given that the storage capacity on a small-BS is finite, only
limited contents can be cached. Furthermore, because the ``local''
preference for particular contents of the MUs currently located within
a small cell may be quite different from the ``global'' popularity of
contents, caching decisions made solely based on the global popularity
will be sub-optimal. In addition, as small cells are densely deployed,
a single MU may be served by multiple small-BSs at the same time so
that caching duplicate copies of the same contents at multiple nearby
small-BSs not only wastes the cache capacity but also fails to
diversify the cached contents among small-BSs. To address these
issues, this paper proposes a mobility-aware content caching problem
for SCNs. In particular, we argue that, as to be shown in
Sec.~\ref{sec:motivation}, because MUs may frequently migrate from one
small cell to another, small-BSs shall carefully choose which contents
to cache by taking the mobility patterns of MUs into consideration, in
order to best serve the MUs currently located within the cell of a
small-BS as well as its future MUs.

The paper proceeds in Sec.~\ref{sec:related}
to review related work.  Sec.~\ref{sec:motivation} motivates the
benefits of considering the mobility patterns of MUs when making
caching decisions. Detailed network architecture and problem
formulation are described in Sec.~\ref{sec:framework}. As the problem
is NP-complete, a heuristic solution termed {\em MobiCacher} with
bounded approximation ratio is presented in Sec.~\ref{sec:heuristic}.
We evaluate the performance of {\em MobiCacher} in
Sec.~\ref{sec:results}, and conclude the paper in
Sec.~\ref{sec:conclusion}.

\section{Related Work\label{sec:related}}

To better serve MUs subject to low-capacity backhaul, caching popular
contents at small-BSs was first envisioned in
\cite{femtocaching2012golrezaei}. The work considered a group of
stationary MUs and defined a distributed caching problem, {\em
i.e.}, which content should be cached by which small-BS to
minimize the average network delay. However, as demonstrated in
Sec.~\ref{sec:motivation}, given that MUs might frequently migrate
from one small cell to another so that the traffic load of each small cell
may fluctuate drastically, a caching solution that ignores MUs'
mobility will be sub-optimal.

In \cite{dandapat2013sprinkler}, a distributed video storage system
termed Sprinkler was proposed. Motivated by the observation that the
contact time between an MU and a small-BS is typically short so that
the MU may not be able to download one complete video file from a
single small-BS, Sprinkler breaks a video file into chunks and
distributes the chunks to the series of small-BSs along the MU's
trajectory. The main objective of Sprinkler is to properly cache video
chunks so that a particular video chunk can be fetched by the MU
before the chunk's playback deadline is due. This work is different
from ours as we focus on caching contents to reduce backhaul traffic.

\section{Motivation\label{sec:motivation}}

\begin{figure}[t]
  \centering
  \subfloat[Time $\tau$]
  {
    \label{subfig:t1}
    \includegraphics[width=.2\textwidth]{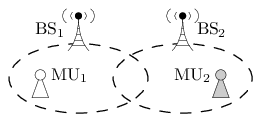}
  }
  \subfloat[Time $\tau+1$]
  {
    \label{subfig:t2}
    \includegraphics[width=.2\textwidth]{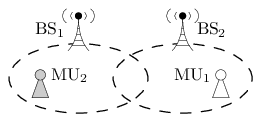}
  }
  \caption{A simple network with two small-BSs and two MUs\label{fig:topo}}
\end{figure}

To motivate the mobility-aware content caching problem, consider the
network shown in Fig.~\ref{fig:topo} with two small-BSs (BS$_1$ and
BS$_2$) and two MUs (MU$_1$ and MU$_2$). In time slot $\tau$, {MU}$_1$
is located within the cell of {BS}$_1$ and {MU}$_2$ within {BS}$_2$
(Fig.~\ref{subfig:t1}). In the next time slot $\tau+1$, both MUs move
to the other cell (Fig.~\ref{subfig:t2}). Both MUs make requests to a
library of three objects (O$_1$, O$_2$, and O$_3$). We assume that the
MUs' requests are recurrent, {\em i.e.}, an MU may request the same
object at different time instances according to the MU's preference
for the object. For instance, Spotify users who do not turn on the
``available offline'' option need to download a music each time the
music is played. If an MU requests an object that is not cached
locally, the corresponding small-BS needs to retrieve the object over
its backhaul network, which incurs a {\it cost} (such as the amount of
energy consumed at the small-BS or the amount of traffic generated in
the backhaul network). We term the {\it product} of the cost of
retrieving an object over a backhaul and the rate at which an MU
requests the object the {\it normalized cost} of the object with
respect to the MU. The normalized cost of an object thus represents
the expected cost incurred by an MU requesting the object in unit
time, if the object is not cached at an associated small-BS. 

\begin{table}[b]
  \centering 
  \begin{tabular}{c | c c c } 
    & {O}$_1$ & {O}$_2$ & {O}$_3$ \\\hline 
    {MU}$_1$ & 8 & 1 & 7\\ 
    {MU}$_2$ & 1 & 9 & 7\\ 
  \end{tabular} 
  \caption{Normalized costs of objects with respect to MU$_1$ and
    MU$_2$ \label{table:cost}} 
\end{table}

For instance, Table~\ref{table:cost} lists the normalized costs of the
three objects with respect to MU$_1$ and MU$_2$. In this example, we
assume that each small-BS has the storage capacity to cache only one
object and an MU's request to a cached object incurs zero cost. In
each time slot, the {\em network cost} is the sum of the normalized
costs of the requested but uncached objects. The {\em total} network
cost is the sum of the network costs incurred over the time slots of a
given time period.

Without considering the mobility of MUs, a small-BS would decide to
cache the object with the highest normalized cost of all the MUs
within its cell. In this example, BS$_1$ only considers the first row
in Table~\ref{table:cost} and BS$_2$ the second row.  Consequently,
{BS}$_1$ would choose to cache {O}$_1$ and {BS}$_2$ caches {O}$_2$.
Therefore, the network cost incurred in time slot $\tau$ is $(1+7) +
(1+7) = 16$, while the network cost in time slot $\tau+1$ is
$(8+7)+(9+7)=31$. The total network cost is $47$. 

In contrast, if a small-BS considers the mobility of MUs ({\it i.e.},
the requests from its future associated MUs) and decides to cache the
objects whose normalized costs are high with respect to both current
MUs and future MUs, the total network cost can be reduced. For
instance, by considering both rows in Table~\ref{table:cost}, both
{BS}$_1$ and BS$_2$ will unanimously cache O$_3$ and the total network
cost is reduced to $19+19=38$. The example clearly shows that taking
the mobility of MUs into consideration can greatly improve the caching
efficiency and reduce the backhaul cost incurred by those uncached
objects.

\section{Mobility-Aware Content Caching Problem\label{sec:framework}}

Given a library of contents, a group of MUs and a set of small-BSs
with storage of finite capacity, the mobility-aware content caching
problem concerns how the contents should be cached at the small-BSs to
minimize the total network cost incurred over the backhaul networks.
In this section, we first state the assumptions made in
Sec.~\ref{subsec:assumptions}, and then formulate the problem in
Sec.~\ref{subsec:form}.

\subsection{Assumptions\label{subsec:assumptions}}

We make the following assumptions.
\begin{itemize}[leftmargin=*]
  \item MUs' preferences for contents are known {\em a priori} (for
    instance, by analyzing MUs' request histories) and supplied as
    input to the problem. It is also assumed that MUs' preferences for
    contents remain constant ({\em i.e.}, do not change drastically
    over a period of time).
  \item Mobility patterns of MUs are also known {\em a priori}. For
    instance, MUs' mobility patterns may be specified in their
    profiles. Also, existing work
    \cite{nicholson2008breadcrumbs}\cite{pang2010wifi} may analyze
    MUs' historical information and predict their future connectivity
    with small-BSs. Our work focuses on utilizing either the specified
    or the predicted mobility information to facilitate effective
    content caching at small-BSs.
  \item Time is discretized into a sequence of time slots. An MU is
    assumed to be static within a single time slot, and may move into
    a different small cell at the beginning of the next time slot.
  \item The contents are of equal sizes and the constraint on cache
    capacity is specified in terms of the maximum number of contents
    that can be cached by a small-BS. The assumption of equal-sized
    contents leads to a cleaner formulation\footnote{This assumption
    can be lifted by introducing auxiliary binary variables that
  indicate whether the heterogeneously sized contents are cached at a
small-BS.}.
  \item When an uncached content is requested, the content has to be
    retrieved from a remote server over the backhaul. When the content
    arrives at a small-BS, one issue is whether the new content should
    replace an existing one in the cache. Such cache replacement is
    beyond the scope of this work, and we assume that, when the
    storage is full, a newly arrived content will be used once and
    then discarded, {\em i.e.}, the content will not replace any
    existing cached content. This is due to the fact that more
    preferred contents will cached earlier.
\end{itemize}

\subsection{Problem Formulation\label{subsec:form}}

Consider a small-cell network with a set $\mathcal{F}$ of small-BSs
and a set $\mathcal{I}$ of MUs that are interested in a set
$\mathcal{L}$ of contents. A small-BS $f$ is equipped with
a storage device that has the capacity to cache up to $C_f$ contents,
where $C_f$ is the cache capacity of $f$.  The set of contents cached
at small-BS $f$ is denoted as $\mathcal{L}_f$. Obviously
$\mathcal{L}_f \subseteq \mathcal{L}$ and $|\mathcal{L}_f|\le C_f$. 

Time is partitioned into a set $\mathcal{T} = \{1, 2,\cdots,
  |\mathcal{T}|\}$ of consecutive time slots. Given time slot $\tau$,
  $\mathcal{F}_i^\tau$ describes the set of small-BSs
  reachable\footnote{Small-BS $f$ is reachable from MU $i$ if MU $i$
  can sense the signal from small-BS $f$. Given the dense deployment,
multiple small-BSs may be reachable by an MU at a given time.} from MU
$i$. Due to mobility, each MU may see different sets of small-BSs at
different time slots, {\it i.e.} $\mathcal{F}_i^{\tau_1}$ may differ
from $\mathcal{F}_i^{\tau_2}$ if $\tau_1 \neq \tau_2$. 

At any time, MU $i$ incurs a normalized cost $c_{i,l}$ if content $l$
is not cached by any small-BS $f\in\mathcal{F}_i^\tau$. As mentioned
in Sec.~\ref{sec:motivation}, the normalized cost of a content with
respect to an MU is the product of the cost of retrieving the content
over the backhaul and the rate at which the content is requested by
the MU, which thus represents the expected cost of the MU requesting
the content in unit time. On the other hand, if at time slot $\tau$,
MU $i$ requests a content that is cached by some small-BS in
$\mathcal{F}_i^\tau$, zero cost is incurred since the request can be
satisfied by the small-BS without incurring any content retrieval over
the backhaul. Denoting $\mathcal{O}^\tau_i$ the set of cached contents
that are available to MU $i$ at time slot $\tau$, $\mathcal{O}^\tau_i$
is thus the union of the cached contents at the small-BSs reachable
from MU $i$, {\it i.e.}, $\mathcal{O}^\tau_i =
\cup_{f\in\mathcal{F}_i^\tau} \mathcal{L}_f$.  The total network cost
is defined as the sum of the normalized costs of the uncached contents
incurred by all the MUs over the set of time slots, which is expressed as
$\sum_{\tau\in\mathcal{T}} \sum_{i\in\mathcal{I}}
\sum_{l\in\mathcal{O}} c_{i, l} \cdot \ind{l \notin
  \mathcal{O}_\tau^i}$, where $\ind{x}$ is an indicator function that
  is 1 when $x$ is true, and 0 otherwise. We can then write the total
  network cost as 
\begin{align}
  & \sum_{\tau\in\mathcal{T}} \sum_{i\in\mathcal{I}}
  \sum_{l\in\mathcal{O}} c_{i, l} \cdot \ind{l \notin
    \mathcal{O}^\tau_i}\nonumber \\
  = & \sum_{\tau\in\mathcal{T}} \sum_{i\in\mathcal{I}}
  \sum_{l\in\mathcal{O}} c_{i, l} \cdot \left(1 - \ind{l \in
    \mathcal{O}^\tau_i}\right)\nonumber \\ 
  = & \sum_{\tau\in\mathcal{T}} \sum_{i\in\mathcal{I}}
  \sum_{l\in\mathcal{O}} c_{i,l} - \sum_{\tau\in\mathcal{T}}
  \sum_{i\in\mathcal{I}} \sum_{l\in\mathcal{O}^\tau_i} c_{i, l}.
  \label{eqn:rewritten_cost}
\end{align} 
Notice that the first item, {\em i.e.}, $\sum_{\tau\in\mathcal{T}}
\sum_{i\in\mathcal{I}} \sum_{l\in\mathcal{O}} c_{i,l}$ in
(\ref{eqn:rewritten_cost}), does not change over different caching
decisions, and hence can be regarded as a constant. Consequently,
minimizing the total network cost is equivalent to maximizing the
second item in (\ref{eqn:rewritten_cost}). We term the second item
{\em caching utility} $\mathcal{U}$ of a caching scheme. The caching
utility represents the cost {\em saved} by caching contents to avoid
retrieving the contents from remote servers over the backhual.

The mobility-aware content caching problem is formulated as
\begin{align}
  \label{eqn:obj} \max_{\{\mathcal{L}_f\}} & \quad \mathcal{U} =
  \sum_{\tau\in\mathcal{T}} \sum_{i\in\mathcal{I}}
  \sum_{l\in\mathcal{O}^\tau_i} c_{i, l}, & \\
  \mbox{s.t.} &\quad \mathcal{L}_f\subseteq \mathcal{L}, \quad
  |\mathcal{L}_f| \le C_f, & \quad \forall f \in\mathcal{F},
  \label{eqn:main:capa} \\ 
  &\quad \mathcal{O}^\tau_i = \cup_{f\in\mathcal{F}_i^\tau}
  \mathcal{L}_f, & \forall i\in\mathcal{I}, \quad \forall
  \tau\in\mathcal{T}. \label{eqn:set}
\end{align}
The objective in (\ref{eqn:obj}) maximizes the caching utility of the
caching scheme. The constraint in (\ref{eqn:main:capa}) ensures that
the number of cached contents at each small-BS does not exceed the
cache capacity $C_f$, and (\ref{eqn:set}) specifies the set of cached
contents that are reachable from MU $i$ at time $\tau$.


We can prove the problem defined in (\ref{eqn:obj})-(\ref{eqn:set}) to
be NP-complete by reducing the problem defined in
\cite{femtocaching2012golrezaei} to our mobility-aware content
caching problem. In fact, the problem defined in
\cite{femtocaching2012golrezaei} is a special instance of our problem
with one single time slot. We omit the proof due to space limitation.

\section{Heuristic Solution\label{sec:heuristic}}

As the problem formulated in (\ref{eqn:obj})-(\ref{eqn:set}) is
NP-complete, in this section, we describe a heuristic solution termed
{\em MobiCacher}. We will prove that the worst-case performance of
{\em MobiCacher} is bounded by an approximation ratio $F$, where $F$
is the maximum number of small-BSs that an MU can sense at any time.
We also evaluate the performance of {\em MobiCacher} via trace-based
simulation, and present the results in Sec.~\ref{sec:results}.

The complexity of the problem defined in
(\ref{eqn:obj})-(\ref{eqn:set}) is largely due to the fact that an MU
may access the caches of multiple small-BSs at a time, and thus,
deciding whether a content should be cached at one small-BS is
affected by what contents are cached by nearby small-BSs.  To simplify
the decision making, {\em MobiCacher} decomposes the original problem
into $|\mathcal{F}|$ sub-problems, with one sub-problem for each
small-BS. Each sub-problem is solved independently, which enables {\em
MobiCacher} to run in polynomial time.

Let $\mathcal{I}^\tau_f$ denote the set of MUs that are associated
with small-BS $f$ at time slot $\tau$. The sub-problem for small-BS
$f$ tries to select a set of contents $\mathcal{L}_f$ that maximizes
the caching utility generated by MUs in $\cup_{\tau\in\mathcal{T}}
\mathcal{I}^\tau_f$. This sub-problem for small-BS $f$ is formulated
as
\begin{align}
  \max_{\mathcal{L}_f} & \quad \mathcal{U}_f =
  \sum_{l\in\mathcal{L}_f} \sum_{\tau\in\mathcal{T}}
  \sum_{i\in\mathcal{I}^\tau_f} c_{i,l} \label{eqn:new:obj}\\
  \mbox{s.t.} &\quad \mathcal{L}_f \subseteq \mathcal{L}, \quad
  |\mathcal{L}_f| \le C_f \label{eqn:new:const}
\end{align}

Notice that $\sum_{\tau\in\mathcal{T}} \sum_{i\in\mathcal{I}_f^\tau}
c_{i, l}$ can be rewritten as $\sum_{i\in\mathcal{I}} T_i^f c_{i,l}$,
where $T_i^f$ is the total number of time slots  that MU $i$ is
associated with small-BS $f$ (termed MU $i$'s {\it sojourn time} in
small-BS $f$). Thus, to solve the problem defined in
(\ref{eqn:new:obj})-(\ref{eqn:new:const}), small-BS $f$ can
iteratively select content $l$ with maximal $\sum_{i\in\mathcal{I}}
T_i^f c_{i,l}$, until the cache at small-BS $f$ fills up. This process
is summarized in Algorithm~\ref{algo}. The complexity of this
algorithm is of polynomial time $O(C_f \cdot |\mathcal{I}|)$. We now
analyze the approximation ratio of {\em MobiCacher} by proving the
following theorem.

\begin{theorem}
  {\em MobiCacher} is a polynomial-time $F$-approximation
  algorithm, where $F$ is the maximum number of small-BSs that an MU
  can sense at any time.\label{theorem:approx}
\end{theorem}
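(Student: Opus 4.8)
The plan is to compare the objective of the decomposed problem, $\sum_{f\in\mathcal{F}}\mathcal{U}_f$, against the true objective $\mathcal{U}$ and to exploit the fact that \emph{MobiCacher} solves each sub-problem to optimality. The polynomial running time is already established by the $O(C_f\cdot|\mathcal{I}|)$ complexity of Algorithm~\ref{algo} run once per small-BS, so what remains is the $F$ bound on solution quality.

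First I would establish a two-sided ``sandwich'' inequality relating the two objectives that holds for \emph{every} feasible caching decision $\{\mathcal{L}_f\}$. Identifying $i\in\mathcal{I}_f^\tau$ with $f\in\mathcal{F}_i^\tau$ and interchanging the order of summation (using $\sum_{\tau}\sum_{i\in\mathcal{I}_f^\tau}c_{i,l}=\sum_{i\in\mathcal{I}}T_i^f c_{i,l}$), the sum of sub-problem utilities becomes
\[
\sum_{f\in\mathcal{F}}\mathcal{U}_f=\sum_{\tau\in\mathcal{T}}\sum_{i\in\mathcal{I}}\sum_{l\in\mathcal{L}}c_{i,l}\,\bigl|\{f\in\mathcal{F}_i^\tau:l\in\mathcal{L}_f\}\bigr|.
\]
The true utility $\mathcal{U}$ has the same form but with the \emph{multiplicity} $|\{f\in\mathcal{F}_i^\tau:l\in\mathcal{L}_f\}|$ replaced by its indicator $\ind{l\in\mathcal{O}_i^\tau}$, since $\mathcal{O}_i^\tau$ is a union and counts a content reachable from several small-BSs only once. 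Because this multiplicity is at least its own indicator and is bounded above by $|\mathcal{F}_i^\tau|\le F$, I obtain the key relation
\[
\mathcal{U}\;\le\;\sum_{f\in\mathcal{F}}\mathcal{U}_f\;\le\;F\cdot\mathcal{U},
\]
valid for any $\{\mathcal{L}_f\}$.

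Next I would argue that \emph{MobiCacher} in fact maximizes $\sum_f\mathcal{U}_f$. Each sub-problem (\ref{eqn:new:obj})--(\ref{eqn:new:const}) reduces to selecting the $C_f$ contents with the largest weights $\sum_{i\in\mathcal{I}}T_i^f c_{i,l}$ under a single cardinality constraint, which the greedy selection of Algorithm~\ref{algo} solves exactly; since the sub-problems share no decision variables, solving each optimally maximizes their sum over all feasible $\{\mathcal{L}_f\}$.

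Finally I would chain the pieces. Let $\{\mathcal{L}_f^{*}\}$ denote a global optimum with value $\mathcal{U}^{*}$ and let $\{\mathcal{L}_f^{\mathrm{MC}}\}$ be the output of \emph{MobiCacher}. Applying the right-hand inequality to \emph{MobiCacher}'s decision, then its sub-problem optimality, then the left-hand inequality to the global optimum yields
\[
\mathcal{U}\bigl(\{\mathcal{L}_f^{\mathrm{MC}}\}\bigr)\;\ge\;\frac{1}{F}\sum_{f}\mathcal{U}_f\bigl(\{\mathcal{L}_f^{\mathrm{MC}}\}\bigr)\;\ge\;\frac{1}{F}\sum_{f}\mathcal{U}_f\bigl(\{\mathcal{L}_f^{*}\}\bigr)\;\ge\;\frac{1}{F}\,\mathcal{U}^{*},
\]
which is exactly the claimed $F$-approximation guarantee. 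I expect the only delicate point to be the double-counting argument behind the sandwich inequality: recognizing that the decomposition's error is precisely the over-counting of contents reachable from multiple small-BSs, and that this over-counting factor is uniformly controlled by $F$. The remaining steps are bookkeeping.
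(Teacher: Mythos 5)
Your proof is correct and follows essentially the same route as the paper's: your two-sided sandwich $\mathcal{U}\le\sum_f\mathcal{U}_f\le F\cdot\mathcal{U}$ is exactly the pair of inequalities $\sum_i\min(1,a_i)\ge\min\left(1,\sum_i a_i\right)\ge\frac{1}{N}\sum_i\min(1,a_i)$ that the paper applies in (\ref{eqn:nine}) and (\ref{eqn:twelve}), and the middle step (sub-problem optimality of the greedy selection) matches (\ref{eqn:eleven}). Your packaging --- stating the sandwich for every feasible solution and then chaining --- is a slightly cleaner presentation of the identical argument, and it correctly handles the reindexing $i\in\mathcal{I}_f^\tau\Leftrightarrow f\in\mathcal{F}_i^\tau$ that the paper glosses over when swapping summation order in (\ref{eqn:ten}).
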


\begin{proof} 
  Let $\mathcal{U}'$ denote the caching utility achieved by {\em
  MobiCacher} and $\{\mathcal{L}'_f\}$ the corresponding sets of
  contents selected to be cached at small-BSs. By the definition of
  caching utility in (\ref{eqn:obj}), $\mathcal{U}'$ can be expressed
  in terms of $\{\mathcal{L}'_f\}$ as
  \begin{align*}
    \mathcal{U}' = \sum_{\tau\in\mathcal{T}} \sum_{i\in\mathcal{I}}
    \sum_{l\in\mathcal{O'}_i^\tau} c_{i,l} = \sum_{\tau\in\mathcal{T}}
    \sum_{i\in\mathcal{I}}
    \left\{\sum_{l\in\cup_{f\in\mathcal{F}_i^\tau} \mathcal{L}'_f}
    c_{i,l}\right\}.
  \end{align*} 
  We introduce binary variable $B'_{l,f}$ to indicate whether
  content $l$ is selected by {\em MobiCacher} to be cached at small-BS
  $f$. The membership of each set $\mathcal{L}'_f$ can be fully
  expressed by the set $\{B'_{l,f}\}$ of binary variables, and we
  rewrite caching utility $\mathcal{U}'$ as
  \begin{align} 
    \mathcal{U}' = \sum_{\tau\in\mathcal{T}} \sum_{i\in\mathcal{I}}
    \sum_{l\in\mathcal{O}} \left\{ c_{i,l} \cdot \min\left(1,
    \sum_{f\in\mathcal{F}_i^\tau} B'_{l,f} \right)\right\}.
    \label{eqn:mobicache_utility} 
  \end{align}

  Similarly, we denote the optimal caching utility by $\mathcal{U}^*$
  and use $\{B^*_{l,f}\}$ to represent the cached contents attained
  by the optimal solution. 
  To show that {\em MobiCacher} is an $F$-approximation algorithm, we
  need to show that $\frac{\mathcal{U}^*}{\mathcal{U}'} \le F$.
  The remainder of the proof continues from
  (\ref{eqn:mobicache_utility}).
  \begin{align}
    \mathcal{U}' \ge \sum_{\tau\in\mathcal{T}} \sum_{i\in\mathcal{I}}
    \sum_{l\in\mathcal{O}} \left\{ c_{i,l} \cdot
    \frac{1}{|\mathcal{F}^\tau_i|} \sum_{f\in\mathcal{F}_i^\tau}
    \min\left(1, B'_{l,f} \right)\right\} \label{eqn:nine} 
  \end{align}
  Eq.~(\ref{eqn:nine}) holds as $\min\left(1, \sum_{i=1}^N a_i\right) \ge
  \frac{1}{N} \sum_{i=1}^N \min\left(1, a_i\right)$, where each $a_i$
  only takes a binary value. We have defined $F$ as the maximum number
  of small-BSs that an MU can sense at any time, {\em i.e.}, $F \ge
  |\mathcal{F}_i^\tau|, \forall i\in\mathcal{I}, \forall
  \tau\in\mathcal{T}$. Thus we have
  \begin{align}
    \mathcal{U}' & \ge \sum_{\tau\in\mathcal{T}}
    \sum_{i\in\mathcal{I}} \sum_{l\in\mathcal{O}} \left\{ c_{i,l}
    \cdot \frac{1}{F} \sum_{f\in\mathcal{F}_i^\tau} \min\left(1,
    B'_{l,f} \right)\right\} \nonumber \\ & = \frac{1}{F}
    \sum_{f\in\mathcal{F}} \sum_{\tau\in\mathcal{T}}
    \sum_{i\in\mathcal{I}} \sum_{l\in\mathcal{O}} \left\{ c_{i,l}
    \cdot \min\left(1, B'_{l,f} \right)\right\} \label{eqn:ten}
  \end{align}
  Notice that $\sum_{\tau\in\mathcal{T}} \sum_{i\in\mathcal{I}}
  \sum_{l\in\mathcal{O}} \left\{ c_{i,l} \cdot \min\left(1, B'_{l,f}
\right)\right\}$ in (\ref{eqn:ten}) is the objective function
in (\ref{eqn:new:obj}) rewritten with the binary variables
$\{B'_{l,f}\}$. Since the value of each $B'_{l,f}$ is chosen to
maximize (\ref{eqn:new:obj}) as opposed to (\ref{eqn:obj}), the
following inequality holds.
  \begin{eqnarray}
    & \sum_{\tau\in\mathcal{T}} \sum_{i\in\mathcal{I}}
    \sum_{l\in\mathcal{O}} \left\{ c_{i,l} \cdot \min\left(1, B'_{l,f}
  \right)\right\} \ge\nonumber \\ 
    &\sum_{\tau\in\mathcal{T}} \sum_{i\in\mathcal{I}}
    \sum_{l\in\mathcal{O}} \left\{ c_{i,l} \cdot \min\left(1,
    B^*_{l,f} \right)\right\}, \quad \forall f\in\mathcal{F} 
    \nonumber \\ 
    \label{eqn:eleven}
  \end{eqnarray} 
  Combining (\ref{eqn:ten}) and (\ref{eqn:eleven}) we have
  \begin{align}
    \mathcal{U}' & \ge \frac{1}{F} \sum_{f\in\mathcal{F}}
    \sum_{\tau\in\mathcal{T}} \sum_{i\in\mathcal{I}}
    \sum_{l\in\mathcal{O}} \left\{ c_{i,l} \cdot \min\left(1,
    B^*_{l,f} \right)\right\} \nonumber \\
    & = \frac{1}{F} \sum_{\tau\in\mathcal{T}} \sum_{i\in\mathcal{I}}
    \sum_{l\in\mathcal{O}} \left\{ c_{i,l} \cdot
    \sum_{f\in\mathcal{F}_i^\tau} \min\left(1, B^*_{l,f}
  \right)\right\} \nonumber \\
    & \ge \frac{1}{F} \sum_{\tau\in\mathcal{T}} \sum_{i\in\mathcal{I}}
    \sum_{l\in\mathcal{O}} \left\{ c_{i,l} \cdot \min\left(1,
    \sum_{f\in\mathcal{F}_i^\tau} B^*_{l,f} \right)\right\}
    \label{eqn:twelve} 
  \end{align} 
  The last inequality in (\ref{eqn:twelve}) holds due to the fact that
  $\sum_{i=1}^N \min\left(1, a_i\right) \ge \min\left(1, \sum_{i=1}^N
  a_i\right)$, where each $a_i$ is a binary variable.  Combining
  (\ref{eqn:twelve}) with the definition of caching utility, it is
  readily seen that $\mathcal{U}' \ge \frac{1}{F} \mathcal{U}^*$, {\em
  i.e.}, $\frac{\mathcal{U}^*}{\mathcal{U}'} \le F$. 
\end{proof}

\begin{algorithm}[t]
  \caption{{\em MobiCacher} for small-BS $f$\label{algo}}
  \begin{algorithmic}[1]
    \REQUIRE A set of candidate contents $\mathcal{L}$; 
    a set of MUs $\mathcal{I}$; 
    MU preferences $\{c_{i, l}\}$; 
    MU sojourn times $\{T_i^f\}$.
  \STATE $\mathcal{L}_f \leftarrow \emptyset$
  \WHILE{$|\mathcal{L}_f| < C_f$}
    \STATE $l\leftarrow\argmax_l{\sum_{i\in\mathcal{I}} T_i^f c_{i,l}}$
    \STATE $\mathcal{L}_f \leftarrow \mathcal{L}_f \cup \{l\}$
    \STATE $\mathcal{L} \leftarrow \mathcal{L} \setminus \{l\}$
  \ENDWHILE
  \RETURN $\mathcal{L}_f$
\end{algorithmic}
\end{algorithm}

\section{Performance Evaluation\label{sec:results}}

\subsection{User Mobility Model\label{subsec:mobility}}

We evaluate the performance of {\em MobiCacher} by adapting the real
trace of mobile users collected by the Wireless Topology Discovery
(WTD) project at UCSD \cite{mcnett2005access}. In the WTD trace, an
active device records all the access points (APs) that the device
could sense across all frequencies at each sampling point. The time
interval between two adjacent sampling points is 20 seconds. We thus
set the length of a time slot to be 20 seconds in our simulations. We
choose to use the WTD trace for our simulation because (1) the dataset
is publicly available, and (2) the WTD trace records multiple APs
visible to an MU at each sampling point, while most of the other traces
only record the AP that an MU is associated with at any time. Our use
of the WTD trace in the simulation substitutes small-BSs for WiFi APs.

The entire WTD trace includes data from 275 PDA devices accessing more
than 400 unique APs during the time interval from 2002/9/22 to
2002/12/08. We choose the busiest day of the trace, namely the day of
2002/10/16, during which the highest number of active MUs was
recorded.  In order to evaluate {\em MobiCacher} under different MU
mobility patterns, we run the simulation over four different one-hour
time intervals starting at 00:00:00, 06:00:00, 12:00:00 and 18:00:00.
Fig.~\ref{fig:cumulative} plots the cumulative 
distribution of MUs' sojourn times in small-BSs in these four time
intervals. We can see that MUs are more mobile (with shorter sojourn
time) during the daytime than that during the night time.

\subsection{User Preference Model}

To model MUs' preferences for contents, we use the song listening logs
collected by Last.fm \cite{thierry2011the}. The Last.fm dataset
contains the song listening history of about 1000 unique MUs. We
select the 200 top-played songs and put the songs into the library
$\mathcal{L}$ of contents. We rank a MU's preference for a
song in $\mathcal{L}$ as how many times the MU listened to the song
divided by the total number of songs the MU listened to. For example,
if an MU listened to a particular song 5 times and the total number of
songs the MU listened to is 100, then the MU's preference for this
song is 0.05.  Moreover, if the MU can retrieve the song from one of
the caches of reachable small-BSs in a time slot, the MU gains a
utility of 0.05. To evaluate the performance of {\em MobiCacher}, we
randomly associate an MU in the real traces generated in
Sec.~\ref{subsec:mobility} with a listening profile and calculate the
MU's preference for songs using the rule mentioned above.

\begin{figure}
  \centering
  \includegraphics[width=\simfigwidth]{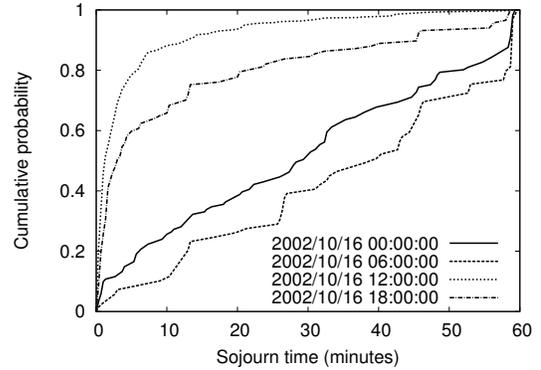}
  \caption{Cumulative distribution of the MUs' sojourn
    times\label{fig:cumulative}}
\end{figure}

\begin{figure*}
  \centering
  \subfloat[2002/10/16 00:00:00\label{subfig:00}] {
    \includegraphics[width=\smallsimfigwidth]{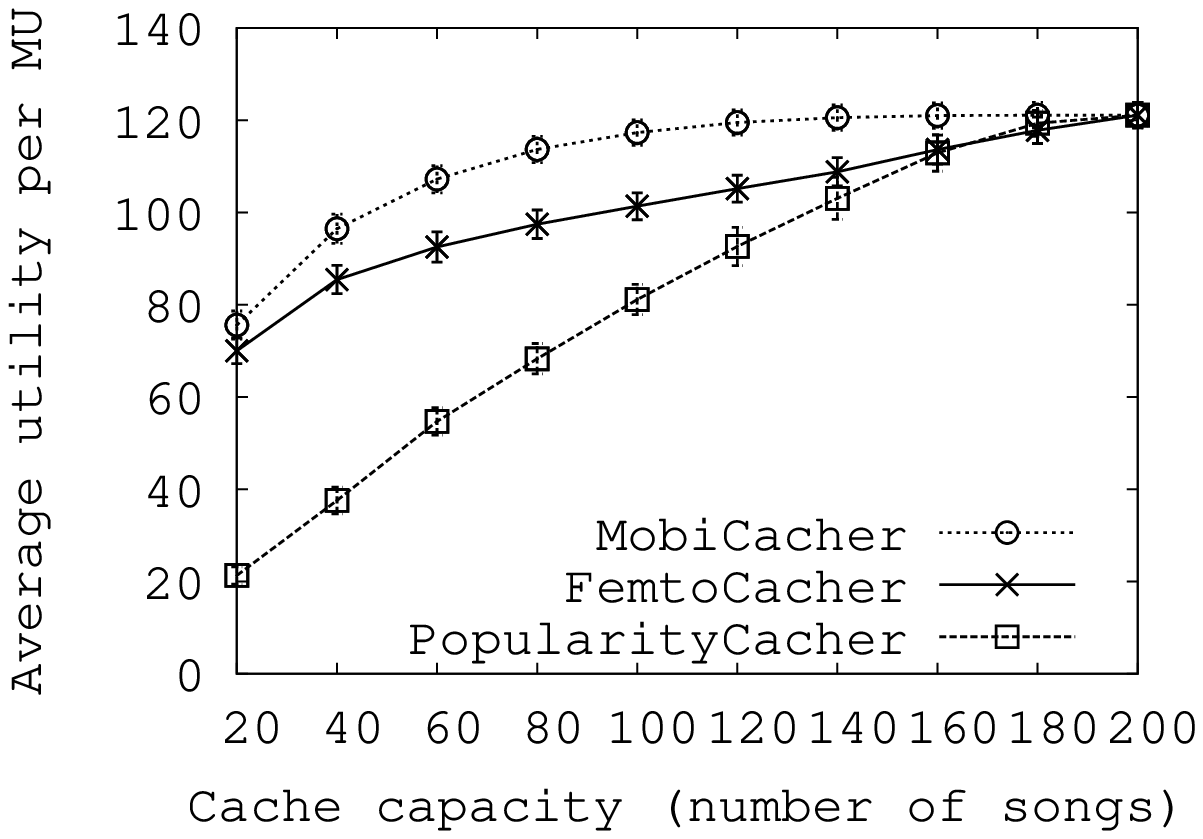}
  }
  \subfloat[2002/10/16 06:00:00\label{subfig:06}] {
    \includegraphics[width=\smallsimfigwidth]{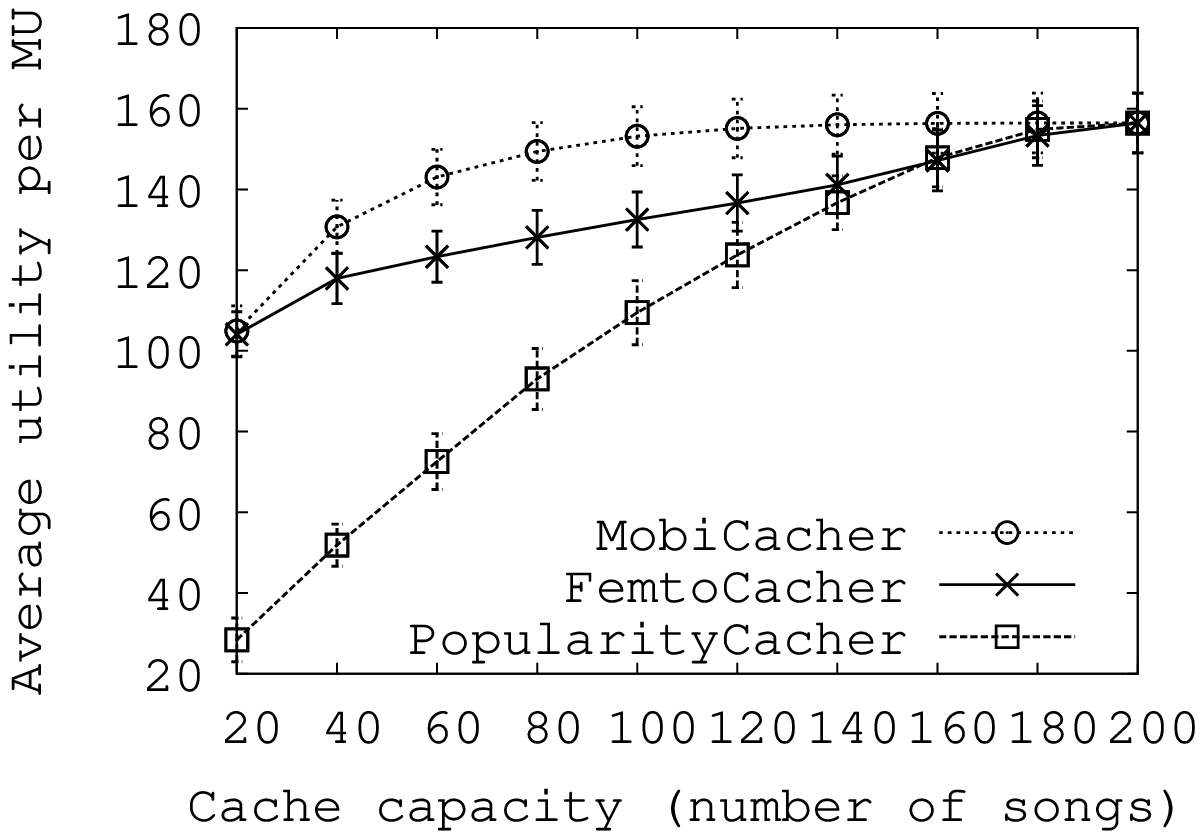}
  }
  \subfloat[2002/10/16 12:00:00\label{subfig:12}] {
    \includegraphics[width=\smallsimfigwidth]{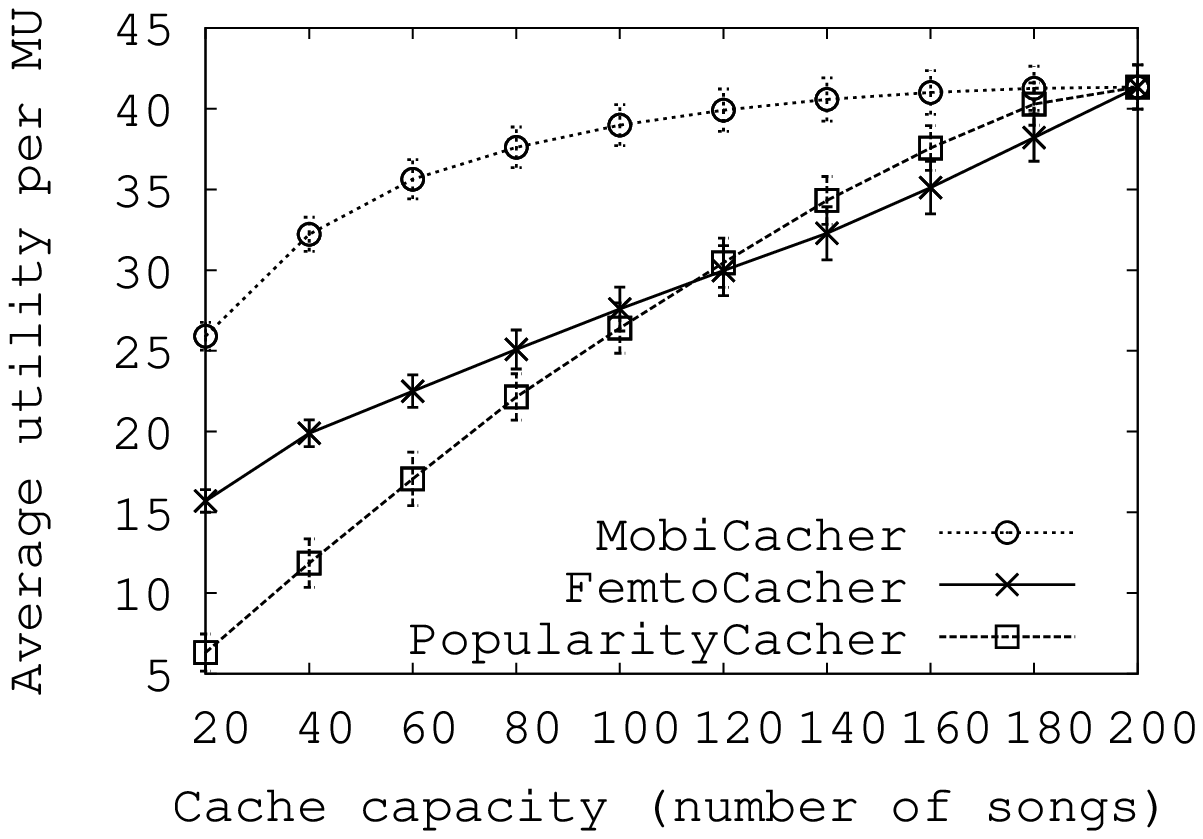}
  }
  \subfloat[2002/10/16 18:00:00\label{subfig:18}] {
    \includegraphics[width=\smallsimfigwidth]{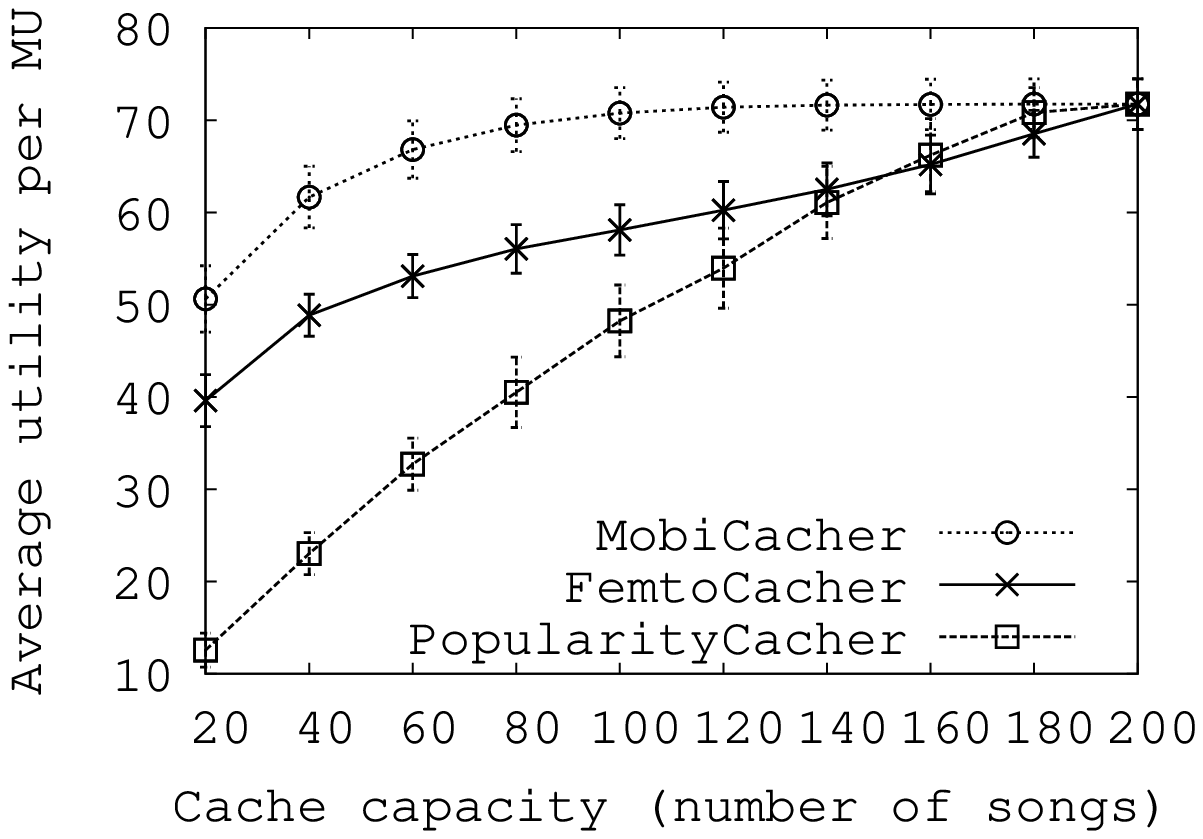}
  }
  \caption{Impact of cache capacity at each small-BS\label{fig:total_utility}}
\end{figure*}

\subsection{Performance of {\em MobiCacher}}

We compare the performance of three schemes:
(1) {\em MobiCacher} -- our mobility-aware caching solution,
(2) {\em FemtoCacher} -- the caching solution proposed in
\cite{femtocaching2012golrezaei} where MUs' mobilities are not
considered and the caching decision is made based on the geographical
distribution of MUs at the beginning of the simulations,
and (3) {\em PopularityCacher} -- each small-BS caches the most
popular songs.


Fig.~\ref{fig:total_utility} plots the average caching utility gained
by each MU against the cache capacity of each small-BS. The results
clearly show that the caching utility produced by {\em MobiCacher}
over the four one-hour simulation periods is higher than the 
utility generated by {\em FemtoCacher} and {\em PopularityCacher}. In
particular, the performance improvement of {\em MobiCacher} is more
significant  when the cache capacity is small, {\em i.e.}, when each
small-BS can cache less than 60 songs. This is due to the fact that,
when the storage capacity of each cache is small, {\em MobiCacher} can
make more informed caching decisions by considering the mobility of
MUs. On the contrary, when the cache capacity is large ({\em i.e.},
each small-BS can cache up to 160 songs), the three caching schemes
perform similarly as almost the entire song library can be cached at
each small-BS.

{\em PopularityCacher} produces the least utility compared with that
of {\em MobiCacher} and {\em FemtoCacher}, as {\em PopularityCacher}
only considers the popularity of the songs but not the geographical
distribution of MUs nor the mobilities of MUs. Also, a song that is
the most popular globally may not be the most popular among the MUs
associated with a particular small-BS. As all of the small-BSs store
the most popular songs, the contents cached in the small-BSs at
different locations are not diversified. In other words, the fact that
an MU may access multiple small-BSs at a time is not taken advantage
of as the cached songs at each small-BS are the same.

Both {\em MobiCacher} and {\em FemtoCacher} consider the geographical
distribution of MUs. {\em FemtoCacher} assumes that MUs are static and
only looks at the initial MU distribution. In practice, an MU
may move among the cells of small-BSs. As time progresses, the current
geographical distribution of the MUs can be quite different from their
initial distribution. {\em FemtoCacher} fails to exploit this fact.
The influence of MU mobility is shown in Fig.~\ref{fig:slot} where we
plot the cumulative utility against the simulation time. We can see
that {\em FemtoCacher} performs similarly or even slightly better than
{\em MobiCacher} at the beginning of the simulation. But as time
progresses and MUs start moving to different locations, the cumulative
utility of {\em FemtoCacher} lags behind that of {\em MobiCacher}. The
impact of MUs' mobility can also be observed from
Figs.~\ref{fig:cumulative}-\ref{fig:total_utility}. When MUs
are less mobile ({\em i.e.}, at night at 00:00:00 or in the early
morning at 06:00:00), the gap between {\em MobiCacher} and {\em
FemtoCacher} is relatively small. However, during the daytime when MUs
are highly mobile, {\em MobiCacher} outperforms {\em FemtoCacher} by
up to 27\% as shown in
Figs.~\ref{fig:total_utility}(c)-\ref{fig:total_utility}(d).

\begin{figure}[t]
  \centering
  \includegraphics[width=\simfigwidth]{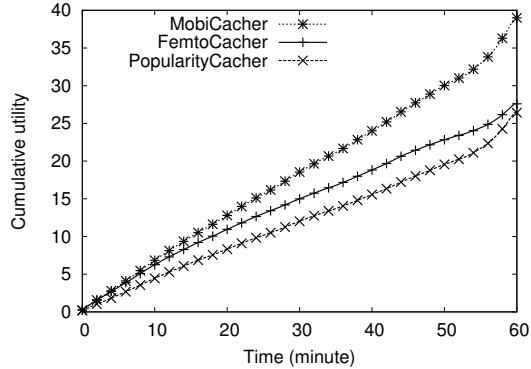}
  \caption{Cumulative utility against time (cache capacity: 100 songs,
    simulation start time: 2002/10/16 12:00:00) \label{fig:slot}}
\end{figure}

\section{Summary and Future Work\label{sec:conclusion}}

The paper defines a mobility-aware content caching problem, which
caches popular contents at the base stations of small-cell network to
ease the stress on the backhaul networks. We formulate the problem
into an optimization framework with the objective of maximizing the
utility of caching.  Given the NP-completeness of the problem, we
developed the {\em MobiCacher} heuristic solution. We prove that the
approximation ratio of {\em MobiCacher} is bounded and evaluate the
performance via trace-based simulation.  Results show that {\em
MobiCacher} yields higher caching utility than other caching
solutions.

Throughout the discussion of this paper, we assume that the
trajectories of MUs are either known a priori or can be predicted.  In
practice, trajectory prediction might not be perfect and the mobility
patterns of MUs can be random. We plan to investigate a {\em robust}
version of the mobility-aware content caching problem by explicitly
taking these uncertainties into the problem formulation.




{The views and conclusions contained herein are those of the authors and
should not be interpreted as necessarily representing the official policies or
endorsements, either expressed or implied, of the Air Force Research
Laboratory or the U.S. Government.}

\end{document}